\pgfplotsset{compat=newest}
\newtheorem{theorem}{Theorem}
\newtheorem{example}[theorem]{Example}
\newtheorem{remark}[theorem]{Remark}
\newtheorem{definition}{Definition}
\title{\LARGE \bf
Twisted Gabidulin Codes in the GPT Cryptosystem
}
\author{Sven Puchinger, Julian Renner, Antonia Wachter-Zeh%
\thanks{S. Puchinger, J. Renner and A. Wachter-Zeh are with the Institute for Communications Engineering, Technical University of Munich, Germany.
Emails: \{sven.puchinger, julian.renner, antonia.wachter-zeh\}@tum.de}%
\thanks{This work was supported by the Technical University of Munich-Institute for Advanced Study, funded by the German Excellence Initiative and European Union Seventh Framework Programme under Grant Agreement No. 291763 and the German Research Foundation (Deutsche Forschungsgemeinschaft, DFG) unter Grant No. WA3907/1-1.}%
}%
\def\ve#1{{\mathchoice{\mbox{\boldmath$\displaystyle #1$}}%
		{\mbox{\boldmath$\textstyle #1$}}%
		{\mbox{\boldmath$\scriptstyle #1$}}%
		{\mbox{\boldmath$\scriptscriptstyle #1$}}}}
\renewcommand{\c}{\ve{c}}
\newcommand{\x}{\ve{x}}
\newcommand{\y}{\ve{y}}
\newcommand{\C}{\ve{C}}
\newcommand{\A}{\ve{A}}
\newcommand{\X}{\ve{X}}
\newcommand{\G}{\ve{G}}
\renewcommand{\H}{\ve{H}}
\newcommand{\NN}{\mathbb{N}}
\newcommand{\Fq}{\mathbb{F}_q}
\newcommand{\Fqm}{\mathbb{F}_{q^m}}
\newcommand{\Code}{\mathcal{C}}
\DeclareMathOperator{\rank}{rank}
\DeclareMathOperator{\dR}{d_R}
\newcommand{\evpolys}{\mathcal{P}^{n,k}_{\tVec,\hVec,\etaVec}}
\newcommand{\twisted}{$[k,\tVec,\hVec,\etaVec]$-twisted }
\newcommand{\twistedC}{$[\alphaVec,\tVec,\hVec,\etaVec]$-twisted }
\newcommand{\Cmult}{\TRS{\alphaVec,\tVec,\hVec,\etaVec}{n,k}{}}
\newcommand{\numTwists}{\ell}
\newcommand{\tVec}{{\ve t}}
\newcommand{\hVec}{{\ve h}}
\newcommand{\etaVec}{{\ve \eta}}
\newcommand{\Lset}{\mathcal{L}_{q^m}}
\newcommand{\qdeg}{\deg_q}
\newcommand{\alphaVec}{{\ve \alpha}}
\newcommand{\betaVec}{{\ve \beta}}
\newcommand{\TRS}[3]{\Code_{#1}^{#3}[#2]}
\newcommand{\LambdaOp}{\Lambda}
\newcommand{\qpow}[1]{^{[#1]}}
\newcommand{\Fqsi}[1]{\mathbb{F}_{q^{s_{#1}}}}
\newcommand{\pcVec}{\ve{\gamma}}
\newcommand{\pc}{\gamma}
\begin{document}

\maketitle
\thispagestyle{empty}
\pagestyle{empty}

\begin{abstract}
In this paper, we investigate twisted Gabidulin codes in the GPT code-based public-key cryptosystem.
We show that Overbeck's attack is not feasible for a subfamily of twisted Gabidulin codes.
The resulting key sizes are significantly lower than in the original McEliece system and also slightly smaller than in Loidreau's unbroken GPT variant.
\end{abstract}

\section{Introduction}

A \emph{rank-metric code} is a set of matrices whose distances are measured by the rank of their difference.
These codes, as well as their most famous family, \emph{Gabidulin codes}, were independently introduced in \cite{Delsarte_1978,Gabidulin_TheoryOfCodes_1985,Roth_RankCodes_1991}.
Gabidulin codes are maximum rank distance (MRD) codes, i.e., they fulfill the rank-metric Singleton bound with equality.

\emph{Twisted Gabidulin codes} were introduced by Sheekey in~\cite{sheekey2016new}, and are defined by adding an extra monomial to the evaluation polynomials of a Gabidulin code and choosing its coefficient in a suitable way such that the new codes remain MRD.
A special case of Sheekey's twisted Gabidulin codes was independently introduced in \cite{otal2016explicit}.
The idea of twisted Gabidulin codes was adapted to the Hamming metric and generalized in \cite{beelen2017twisted,beelen2018structural}, resulting in so-called twisted Reed--Solomon (RS) codes.
The methods developed for the latter codes were used to widely generalize Sheekey's construction in~\cite{puchinger2017further}.

The \emph{Gabidulin--Paramonov--Tretjakov (GPT) cryptosystem}~\cite{gabidulin1991ideals} is a modification of the McEliece code-based public-key cryptosystem using rank-metric codes.
Its motivation arises from the fact that all known generic rank-metric decoders \cite{chabaud1996cryptographic,ourivski2002new,gaborit2016complexity,aragon2017improved} are exponential in the square of the code parameters, which, compared to codes in Hamming metric, results in much smaller key sizes for a given security level (in cases where generic decoding is the most efficient attack).
The original GPT system was modified several times \cite{gabidulin2001modified,gabidulin2003reducible,loidreau2010designing,rashwan2011security,gabidulin2008attacks,gabidulin2009improving,rashwan2010smart} due to efficient attacks by Gibson \cite{gibson1995severely,gibson1996security} and Overbeck \cite{overbeck2006extending,overbeck2005new,overbeck2008structural}. However, most of these systems were broken by variants of Overbeck's attack, cf.~\cite{horlemann2016considerations,otmani2016improved,horlemann2018extension}.
The only variants that have not been broken so far are the one by Loidreau \cite{loidreau2016evolution,loidreau2017new} and the one by Berger et al.~\cite{berger2017gabidulin}.

Recently, it was shown that a subfamily of twisted RS codes resists several known structural attacks on the McEliece cryptosystem based on RS-like codes \cite{beelen2018structural}.
In this paper, we establish an analogous result in the rank metric.
We prove that certain twisted Gabidulin codes resist Overbeck's attack which provides key sizes that are smaller than comparable code-based systems for the same security levels.

\section{Preliminaries}

Let $\Fqm$ and $\Fq$ be finite fields, where $\Fqm$ is an extension field of $\Fq$.
It is well-known that $\Fqm$ is also an $m$-dimensional vector space over $\Fq$ and that elements of $\Fqm$ can be uniquely represented in a fixed basis of $\Fqm$ over $\Fq$.
Hence, we can represent a vector $\c \in \Fqm^n$ of length $n$ as an $m \times n$ matrix $\C \in \Fq^{m \times n}$ by expanding the entries of $\c$ column-wise.
The rank $\rank(\c)$ of a vector $\c \in \Fqm$ is defined by the rank of its matrix representation.

\subsection{Rank-Metric Codes}

The \emph{rank metric} is defined by
\begin{equation*}
\dR \, : \, \Fqm \times \Fqm \to \NN_0, \quad (\x,\y) \mapsto \rank(\x-\y).
\end{equation*}
It can be shown that the rank metric is indeed a metric. A linear rank-metric code of length $n$, dimension $k$ and minimum (rank) distance $d$ over $\Fqm$, denoted by $\Code$ or $\Code[n,k]$, is a $k$-dimensional subspace of $\Fqm^n$, which fulfills
\begin{equation*}
d = \min\limits_{\substack{\c_1,\c_2 \in \Code \\ \c_1\neq\c_2}} \dR(\c_1,\c_2).
\end{equation*}
The rank-metric Singleton bound states that the minimum distance of such a code with $n \leq m$ is upper-bounded by
$d \leq n-k+1$.
If a code fulfills this upper bound with equality, then it is called \emph{maximum rank distance (MRD)} code.

\subsection{Linearized Polynomials}

Linearized polynomials were first studied in \cite{ore1933special}. They are polynomials of the form
$f = \sum_{i} f_i x^{q^i}$,
where $f_i \in \Fqm$ and $f_i \neq 0$ for finitely many $i$.
For notational convenience, we define $[i] := q^i$, so we write $f = \sum_{i} f_i x^{[i]}$.
The $q$-degree of $f$ is defined by
\begin{equation*}
\qdeg f := \begin{cases}
\max\{i \, : \, f_i \neq 0\}, &\text{if } f \neq 0, \\
-\infty, &\text{if } f = 0.
\end{cases}
\end{equation*}
The evaluation map
$f(\cdot) \, : \, \Fqm \to \Fqm$, $\alpha \mapsto \sum_{i} f_i \alpha^{q^i}$
is $\Fq$-linear.
Using ordinary addition and composition of polynomials as multiplication, linearized polynomials form a (non-commutative) ring, which we denote by $\Lset$.

\subsection{Gabidulin Codes}

Gabidulin codes are MRD codes that were independently introduced in \cite{Delsarte_1978,Gabidulin_TheoryOfCodes_1985,Roth_RankCodes_1991}. They are defined as follows.
\begin{definition}[Gabidulin code\cite{Delsarte_1978,Gabidulin_TheoryOfCodes_1985,Roth_RankCodes_1991}]
	Let $\alpha_1,\dots,\alpha_n \in \Fqm$ be linearly independent over $\Fq$ and $k<n$. The corresponding $[n,k]$ Gabidulin code is defined by
	\begin{equation*}
	\Code_\mathrm{Gab}\! = \Big\{\! \big[ f(\alpha_1), f(\alpha_2), \dots, f(\alpha_n)\big] : f \in \Lset, \, \qdeg f\! < k \Big\}.
	\end{equation*}
\end{definition}

\subsection{Sum Operator}

\begin{definition}[$q$-Sum]
Let $\Code[n,k]$ be a linear code over $\Fqm$ and $i \in \NN_0$. Then, the ($i^\mathrm{th}$) $q$-sum of $\Code$ is defined by
\begin{equation*}
\LambdaOp_i(\Code) = \Code + \Code\qpow{1} + \dots + \Code\qpow{i}.
\end{equation*}
\end{definition}

It is well-known that a random code fulfills
\begin{equation*}
\dim \LambdaOp_i(\Code) = \min\{n,i k\}
\end{equation*}
with high probability and that a Gabidulin code satisfies
\begin{equation*}
\dim \LambdaOp_i(\Code) = \min\{n,k+i\}.
\end{equation*}
Hence, for $k<n-i$, a Gabidulin code has smaller $q$-sum dimension than a random code with high probability.
This observation was used in \cite{overbeck2005new,overbeck2008structural} to obtain a distinguisher and an attack on the GPT cryptosystem.

We will also use the $q$-sum operator for matrices.
\begin{definition}[$q$-Sum]
Let $\A \in \Fq^{s \times n}$ and $i \in \NN_0$. Then, the ($i^\mathrm{th}$) $q$-sum of $\A$ is defined by
\begin{equation*}
\LambdaOp_i(\A) = 
\begin{bmatrix}
	\A\\
	\A^{[1]}\\
	\vdots\\
	\A^{[i]}
\end{bmatrix}
\end{equation*}
\end{definition}
If $\G$ is a generator matrix of a code $\Code$, then $\LambdaOp_i(\G)$ is a generator matrix of $\LambdaOp_i(\Code)$.

\section{Variants of the GPT Cryptosystem and Overbeck's Attack}

\subsection{The GPT Crpytosystem and Its Variants}

The GPT cryptosystem is an instantiation of the McEliece cryptosystem with Gabidulin codes.
In this paper, we study the most general form of the GPT cryptosystem, cf.~\cite{Overbeck-ExtendingGibsonAttack}.

\textbf{Key Generation:}
Let $\G$ be the $k \times n$ generator matrix of an $[n,k]$ Gabidulin code,  $\ve{S}$ a random full-rank $k \times k$ matrix over $\Fqm$, $\X$ a random matrix of size $k \times \lambda$ over $\Fqm$ of rank $1 \leq s \leq \lambda$
and $\ve{P}$ a random $(n+\lambda) \times (n+\lambda)$ full-rank matrix  over $\Fq$. Then, the public key of the GPT cryptosystem is defined as:
\begin{equation}\label{eq:gpt_pubkey}
\G_{pub} = \ve{S} [\ve{X} | \G]\ve{P}
\end{equation}
and the integers $n$, $\lambda$, $k$, $t = \left\lfloor\frac{n-k}{2} \right \rfloor$.

To ensure proper decoding, it is important that $\ve{P}$ lies in $\Fq$. However, this enables Overbeck's attack (see the following subsection). In \cite{gabidulin2009improving,rashwan2011security}, variants for $\ve{P}$ were suggested where $\ve{P}$ is in $\Fqm$. However, as shown in \cite{otmani2016improved}, in all of these variants, the public key can be rewritten as
\begin{equation*}
\G_{pub} = \ve{S}^* [\ve{X}^* | \ve{G}^*] \ve{P}^*,
\end{equation*}
where $\ve{P}^*$ is in $\Fq$.

\textbf{Encryption:} To encode a plaintext $\ve{m}$, choose randomly a vector $\ve{z} \in \Fqm^n$ of rank $t$ and compute the ciphertext as
\begin{equation*}
\ve{c} = \ve{m}\G_{pub} + \ve{z}.
\end{equation*}

\textbf{Decryption:}
Apply the decoding algorithm corresponding to $\G$ to the last $n$ symbols of $\ve{c}\ve{P}^{-1}$. Clearly $\rank (\ve{z}\ve{P}^{-1}) \leq t$. 
This decoder therefore provides $\ve{m}\ve{S}$ and by inverting $\ve{S}$, the secret message $\ve{m}$ can be recovered.

\subsection{Overbeck's Attack}\label{ssec:Overbecks_Attack}

Overbeck's attack \cite{overbeck2005new,overbeck2008structural} is based on two observations:

\begin{enumerate}[label=\roman*)]
	\item\label{itm:h_in_dual_space} An $[n,k]$ Gabidulin code has a parity-check matrix of the form
	\begin{equation*}
	\!\!\!\!\!\!\!\!\!\!\H \!=\! \begin{bmatrix}
	\pcVec \\
	\pcVec\qpow{1} \\
	\vdots \\
	\pcVec\qpow{n-k-1} \\
	\end{bmatrix} 
	\!\!:=\! \begin{bmatrix}
	\pc_1 & \pc_2 & \dots & \pc_n \\
	\pc_1\qpow{1} & \pc_2\qpow{1} & \dots & \pc_n\qpow{1} \\
	\vdots & \vdots & \ddots & \vdots \\
	\pc_1\qpow{n-k-1}\! & \pc_2\qpow{n-k-1}\! & \dots &\pc_n\qpow{n-k-1}\!
	\end{bmatrix}\!,
	\end{equation*}
	where $\pcVec \in \Fqm^n$ with linearly independent entries.
	\item\label{itm:Lambda_i_n-1} There is an integer $i$ such that
	\begin{equation*}
	\dim \LambdaOp_i(\Code) = n-1.
	\end{equation*}
\end{enumerate}
The idea is to recover a vector $\pcVec$ by choosing a non-zero vector in the dual code of $\LambdaOp_i(\Code)$.

In the following, we show that property (ii) is fulfilled for the most general form of the GPT system, the one with a public key as in~\eqref{eq:gpt_pubkey}.

The generator matrix of the code $\LambdaOp_i(\Code_{pub})$ is:
\begin{equation*}
\LambdaOp_i(\G_{pub}) = 
\begin{bmatrix}
	\G_{pub}\\
	\G_{pub}^{[1]}\\
	\vdots\\
	\G_{pub}^{[i]}
\end{bmatrix}
 =  
 \ve{S}^\prime \cdot
 \begin{bmatrix}
 	\ve{X} | \G\\
\ve{X}^{[1]} | \G^{[1]}\\
\vdots\\
\ve{X}^{[i]} | \G^{[i]}\\
 \end{bmatrix} \cdot \ve{P},
\end{equation*}
where $\ve{P}^{[i]} = \ve{P}$ since $\ve{P} \in \Fq^{\ell + n}$ and $\ve{S}^\prime$ is a block diagonal matrix with $\ve{S}, \ve{S}^{[1]}, \dots, \ve{S}^{[i]}$ on the diagonal.

Since
\begin{equation*}
\rank
\begin{bmatrix}
\ve{X}\\
\ve{X}^{[1]}\\
\vdots\\
\ve{X}^{[i]}\\
\end{bmatrix} 
\leq \min\{\lambda, i\},
\
\text{and}
\
\rank
\begin{bmatrix}
\G\\
\G^{[1]}\\
\vdots\\
\G^{[i]}\\
\end{bmatrix} 
 = k +i,
\end{equation*}
we have $\dim \LambdaOp_i(\Code_{pub}) \leq \min\{n, k+i+t\}$.
For $i = n-k-t-1$, we get that $\rank \LambdaOp_i(\G_{pub}) = n-1$ and there exists a vector $\ve{v} = (\ve{0} |\ve{v}^\prime)$ such that $\LambdaOp_i(\G_{pub}) \cdot \ve{v}^T = \ve{0}$ and Overbeck's attack can be applied by simply using a polynomial-time decoder on the public code.

\section{Twisted Gabidulin Codes}

\subsection{Definition}

\begin{definition}[Twisted Gabidulin Code, \cite{puchinger2017further}]\label{def:tgab_definition}
Let $n,k,\numTwists \in \NN$ with $k < n$ and $\ell \leq n-k$. Choose a 
\begin{itemize}
\item \emph{hook vector}\footnote{For didactic reasons, this definition slightly differs from the one in \cite{puchinger2017further}, i.e., is a special case.} $\hVec \in \{0,\dots,k-1\}^\numTwists$ and a
\item \emph{twist vector} $\tVec \in \{1,\dots,n-k\}^\numTwists$ with distinct $t_i$, and let
\item $\etaVec \in (\Fqm \setminus \{0\})^\numTwists$.
\end{itemize}
The set of \emph{\twisted linearized polynomials} over $\Fqm$ is defined by
\begin{equation*}
\evpolys = \left\{ f = \sum_{i=0}^{k-1} f_i x\qpow{i} + \sum_{j=1}^{\numTwists} \eta_j f_{h_j} x\qpow{k-1+t_j} : f_i \in \Fqm \right\}.
\end{equation*}
Let $\alpha_1,\dots,\alpha_n \in \Fqm$ be linearly independent over $\Fq$ and write $\alphaVec = [\alpha_1,\dots,\alpha_n]$.
The \emph{\twistedC Gabidulin code} of length $n$ and dimension $k$ is given by
\begin{equation*}
\Cmult = \left\{ \big[f(\alpha_1),f(\alpha_2), \dots, f(\alpha_n)] \, : \, f \in \evpolys \right\}.
\end{equation*}
\end{definition}

\subsection{Generator Matrix of a Twisted Gabiulin Code}

A generator matrix of a twisted Gabidulin code with $h_1 < h_2 < \dots < h_\ell$ is given by:

\begin{equation*}
\G_\mathrm{TGab} =
\begin{bmatrix}
\alphaVec \\
\alphaVec\qpow{1} \\
\vdots \\
\alphaVec\qpow{h_1-1} \\
\alphaVec\qpow{h_1} + \eta_1 \alphaVec\qpow{k-1+t_1} \\
\alphaVec\qpow{h_1+1} \\
\vdots \\
\alphaVec\qpow{h_\ell-1} \\
\alphaVec\qpow{h_\ell} + \eta_\ell \alphaVec\qpow{k-1+t_\ell} \\
\alphaVec\qpow{h_\ell+1} \\
\vdots \\
\alphaVec\qpow{k-1}
\end{bmatrix}.
\end{equation*}

\begin{example}\label{ex:Example_Generator_Matrix}
	The following matrix is a example for the generator matrix of an $[n,k] = [27,10]$ twisted Gabidulin code with $\tVec = [6,12]$, $\hVec = [5,6]$, see also the illustration in Figure~\ref{fig:example_G}:
\begin{align*}
\G_\mathrm{TGab} =
\begin{bmatrix}
\alphaVec \\
\alphaVec\qpow{1} \\
\alphaVec\qpow{2} \\
\alphaVec\qpow{3} \\
\alphaVec\qpow{4} \\
\alphaVec\qpow{5} + \eta_1 \alphaVec\qpow{15} \\
\alphaVec\qpow{6} + \eta_2 \alphaVec\qpow{21} \\
\alphaVec\qpow{7} \\
\alphaVec\qpow{8} \\
\alphaVec\qpow{9} \\
\end{bmatrix}
\end{align*}
\end{example}

\begin{figure}[ht]
\begin{center}
	\resizebox{\columnwidth}{!}{
	\begin{tikzpicture}
	\def\ylabelpos{-1cm}
	\def\xlevelminusone{-3cm}
	\def\xlevelzero{0cm}
	\def\xlevelone{3cm}
	\def\xleveltwo{6cm}
	\def\xlevelthree{9cm}
	\def\xwidth{0.6cm}
	\def\ywidth{0.6cm}
	\def\ydist{1cm}
	\def\nval{27}
	\def\kval{5}
	\def\twistval{6}
	\def\twistvaltwo{8}
	\def\twistvalthree{12}
	\def\myeps{0.1cm}
	\def\yshift{-5cm}
	\tikzstyle{coeffNodesPure}=[draw, rectangle, color=blue, minimum width=\xwidth, minimum height=\ywidth, inner sep=0pt]
	\tikzstyle{coeffNodes}=[draw, rectangle, minimum width=\xwidth, minimum height=\ywidth, inner sep=0pt]
	\tikzstyle{hoookNodes}=[draw, rectangle, fill=blue!10, minimum width=\xwidth, minimum height=\ywidth, inner sep=0pt]
	\tikzstyle{twistNodes}=[draw, rectangle, fill=black!10, minimum width=\xwidth, minimum height=\ywidth, inner sep=0pt]
	\tikzstyle{coeffNodesLevels}=[draw, rectangle, color=blue, minimum width=\xwidth, minimum height=\ywidth, inner sep=0pt]

	%%%%%%%%%%%%%%%%%%%%%%%%%%%%%%%%%%%%%%%%%%%%%%%%%%%%%%%%%%%%%%%%%%%%%%%%%%%%%%%
	% Code
	%%%%%%%%%%%%%%%%%%%%%%%%%%%%%%%%%%%%%%%%%%%%%%%%%%%%%%%%%%%%%%%%%%%%%%%%%%%%%%%
	% Dotted rectangle
	\draw[dotted] (-0.5*\xwidth,-0.5*\ywidth) rectangle (\nval*\xwidth-0.5*\xwidth,0.5*\ywidth);
	% Coefficient nodes
	\node[coeffNodes]  (levelOnecoeff0) at (0*\xwidth,0) {};
	\node[coeffNodes]  (levelOnecoeff1) at (1*\xwidth,0) {};
	\node[coeffNodes]  (levelOnecoeff2) at (2*\xwidth,0) {};
	\node[coeffNodes]  (levelOnecoeff3) at (3*\xwidth,0) {};
	\node[coeffNodes]  (levelOnecoeff4) at (4*\xwidth,0) {};
	\node[hoookNodes]  (levelOnecoeff5) at (5*\xwidth,0) {};
	\node[hoookNodes]  (levelOnecoeff6) at (6*\xwidth,0) {};
	\node[coeffNodes]  (levelOnecoeff7) at (7*\xwidth,0) {};
	\node[coeffNodes]  (levelOnecoeff8) at (8*\xwidth,0) {};
	\node[coeffNodes]  (levelOnecoeff9) at (9*\xwidth,0) {};
	\node[twistNodes] (levelOnetwist1) at (15*\xwidth,0) {};
	\node[twistNodes] (levelOnetwist2) at (21*\xwidth,0) {};
	% Dependencies
	\draw[bend angle=20, bend left] (levelOnecoeff5.north) to (levelOnetwist1.north);
	\draw[bend angle=20, bend left] (levelOnecoeff6.north) to (levelOnetwist2.north);
	
	\end{tikzpicture}
	}
\end{center}
\caption{Illustration of the {generator matrix} in Example~\ref{ex:Example_Generator_Matrix}.}
\label{fig:example_G}
\end{figure}

\subsection{MRD Twisted Gabidulin Codes}

It was shown in \cite{puchinger2017further} that if the $\eta_j$ and the evaluation points $\alpha_i$ are chosen in a proper way, the twisted Gabidulin code $\Cmult$ is MRD. 
\begin{theorem}[Twisted Gabidulin are MRD, {\cite[Thm~1]{puchinger2017further}}]\label{thm:MRD_condition}
Let $s_0,\dots,s_\ell \in \NN$ be such that
\begin{equation*}
\Fq \subseteq \Fqsi{0} \subsetneq \Fqsi{1} \subsetneq \dots \subsetneq \Fqsi{\ell} = \Fqm
\end{equation*}
is a chain of subfields. Choose $k<n\leq s_0$ and $\hVec,\tVec,\etaVec,\alphaVec$ as in Definition~\ref{def:tgab_definition} with the additional requirements
\begin{align*}
\alpha_1,\dots,\alpha_n &\in \Fqsi{0}, \text{ and} \\
\eta_i &\in \Fqsi{i} \setminus \Fqsi{i-1} \quad \forall \, i=1,\dots,\ell.
\end{align*}
Then, $\Cmult$ is MRD.
\end{theorem}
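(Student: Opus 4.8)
\emph{Proof idea.} The plan is to show directly that every nonzero codeword of $\Cmult$ has rank weight at least $n-k+1$. Granting this, a nonzero codeword has positive rank, so the evaluation map $\evpolys\to\Cmult$ is injective; since $\dim_{\Fqm}\evpolys=k$, the code then has dimension $k$, and the rank-metric Singleton bound (valid because $n\le s_0\le m$) gives $d\le n-k+1$, whence $d=n-k+1$, i.e., $\Cmult$ is MRD. So it remains to prove the rank-weight bound. Since $\alpha_1,\dots,\alpha_n$ are $\Fq$-linearly independent, for $V:=\langle\alpha_1,\dots,\alpha_n\rangle_{\Fq}$ and any linearized polynomial $f$ one has $\rank\big[f(\alpha_1),\dots,f(\alpha_n)\big]=\dim_{\Fq}f(V)=n-\dim_{\Fq}(\ker f\cap V)$; hence it suffices to show that no nonzero $f\in\evpolys$ vanishes on a $k$-dimensional $\Fq$-subspace $W\subseteq V$, and we note that any such $W$ satisfies $W\subseteq V\subseteq\Fqsi{0}$ because $\alpha_i\in\Fqsi{0}$.

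Suppose such $f\neq0$ and $W$ existed, and fix an $\Fq$-basis $w_1,\dots,w_k$ of $W$; all $w_a\in\Fqsi{0}$. Consider the $k\times(k+\numTwists)$ matrix $\A$ over $\Fqsi{0}$ whose $a$-th row is $\big(w_a\qpow{0},\dots,w_a\qpow{k-1},w_a\qpow{k-1+t_1},\dots,w_a\qpow{k-1+t_\numTwists}\big)$. The conditions $f(w_a)=0$ are equivalent to saying that $\ve v:=(f_0,\dots,f_{k-1},\eta_1 f_{h_1},\dots,\eta_\numTwists f_{h_\numTwists})^{\top}$ lies in the right kernel of $\A$. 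The first $k$ columns of $\A$ form a Moore matrix of the $\Fq$-independent elements $w_a$ and are therefore linearly independent; hence $\rank\A=k$, the kernel has dimension $\numTwists$, and, since $\A$ is a matrix over $\Fqsi{0}$ whose pivot columns may be taken among the first $k$, this kernel is defined over $\Fqsi{0}$ and admits a unique ``echelon'' basis $\ve u^{(1)},\dots,\ve u^{(\numTwists)}\in\Fqsi{0}^{k+\numTwists}$ with the $(k+p')$-th coordinate of $\ve u^{(p)}$ equal to $\delta_{p,p'}$. Matching the last $\numTwists$ coordinates of $\ve v$ gives $\ve v=\sum_{p=1}^{\numTwists}\eta_p f_{h_p}\,\ve u^{(p)}$, and reading off the coordinate carrying $f_{h_j}$ (which is among the first $k$, as $h_j\le k-1$) yields $f_{h_j}=\sum_{p=1}^{\numTwists}\big(u^{(p)}_{h_j}\big)\,\eta_p f_{h_p}$ for $j=1,\dots,\numTwists$, where $u^{(p)}_{h_j}\in\Fqsi{0}$. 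Setting $\ve B:=\big(u^{(p)}_{h_j}\big)_{j,p}\in\Fqsi{0}^{\numTwists\times\numTwists}$, $\ve D:=\mathrm{diag}(\eta_1,\dots,\eta_\numTwists)$ and $\ve F:=(f_{h_1},\dots,f_{h_\numTwists})^{\top}$, this reads $(\I-\ve B\ve D)\ve F=\ve 0$. If $\ve F=\ve 0$ then $\ve v=\ve 0$, hence $f=0$, a contradiction; so $\ve F\neq\ve0$ and thus $\det(\I-\ve B\ve D)=0$.

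The crux, and the step I expect to be the main obstacle, is the following linear-algebraic lemma: for every $\ve B\in\Fqsi{0}^{r\times r}$ and all $\eta_1,\dots,\eta_r$ with $\eta_i\in\Fqsi{i}\setminus\Fqsi{i-1}$ along the given subfield chain, $\det\!\big(\I_r-\ve B\,\mathrm{diag}(\eta_1,\dots,\eta_r)\big)\neq0$. I would prove it by induction on $r$, the base case $r=0$ being immediate (empty determinant equal to $1$). For the step, expand the determinant multilinearly in the last variable, $\det(\I_r-\ve B\ve D)=P_0+\eta_r P_1$, where $P_0=\det(\I_{r-1}-\ve B'\ve D')$ with $\ve B'$ the leading principal $(r-1)\times(r-1)$ block of $\ve B$ and $\ve D'=\mathrm{diag}(\eta_1,\dots,\eta_{r-1})$, and $P_1$ is a determinant built only from entries of $\ve B$ and from $\eta_1,\dots,\eta_{r-1}$; in particular $P_0,P_1\in\Fqsi{r-1}$. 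By the induction hypothesis $P_0\neq0$, so if $\det(\I_r-\ve B\ve D)$ were zero we could solve $\eta_r=-P_0/P_1\in\Fqsi{r-1}$ (which in particular forces $P_1\neq0$), contradicting $\eta_r\notin\Fqsi{r-1}$. Applying the lemma with $r=\numTwists$ contradicts $\det(\I-\ve B\ve D)=0$ and completes the proof. The only genuinely delicate points are the claim that $\ker\A$ is defined over $\Fqsi{0}$ with pivots among the first $k$ columns — which follows from Gaussian elimination over $\Fqsi{0}$ together with nonsingularity of the Moore matrix — and carefully tracking, throughout the determinant expansion, which subfield of the chain each quantity belongs to.
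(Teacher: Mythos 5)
The paper states Theorem~\ref{thm:MRD_condition} purely as a citation of \cite[Thm.~1]{puchinger2017further} and does not reproduce its proof, so there is no in-paper argument to compare against. Your blind proof is correct and complete. The reduction of the MRD property to the non-existence of a nonzero $f\in\evpolys$ vanishing on a $k$-dimensional $\Fq$-subspace $W\subseteq\langle\alpha_1,\dots,\alpha_n\rangle_{\Fq}\subseteq\Fqsi{0}$ is standard; the use of the echelon kernel basis of $\A$ over $\Fqsi{0}$ (justified because the first $k$ columns form a nonsingular Moore matrix, forcing all pivots into those columns) correctly produces the system $(\I-\ve B\ve D)\ve F=\ve 0$ with $\ve B$ defined over $\Fqsi{0}$; and the determinant lemma---$\det\big(\I_r-\ve B\,\mathrm{diag}(\eta_1,\dots,\eta_r)\big)\neq 0$ for $\ve B$ over $\Fqsi{0}$ and $\eta_i\in\Fqsi{i}\setminus\Fqsi{i-1}$---is both true and the decisive step. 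The inductive argument holds because in the expansion $\det(\I_r-\ve B\ve D)=P_0+\eta_r P_1$ the quantities $P_0$ and $P_1$ lie in $\Fqsi{r-1}$ (the entries of $\ve B$ are in $\Fqsi{0}$ and $\eta_1,\dots,\eta_{r-1}\in\Fqsi{r-1}$), $P_0=\det(\I_{r-1}-\ve B'\ve D')\neq 0$ by the induction hypothesis, and $\det=0$ would force $\eta_r=-P_0/P_1\in\Fqsi{r-1}$, contradicting the subfield-chain condition on $\eta_r$. This coincides in structure with the argument in \cite{puchinger2017further}, which in turn adapts the proof for twisted Reed--Solomon codes in \cite{beelen2017twisted}: encode the ``vanishing on a $k$-dimensional space'' condition as singularity of an $\Fqsi{0}$-matrix perturbed by the diagonal of the $\eta_i$, and rule out singularity by descending along the subfield chain.
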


\begin{remark}
We can choose $s_i = 2 \cdot s_{i-1}$ in Theorem~\ref{thm:MRD_condition}. In this way, we obtain
\begin{equation*}
m = 2^\ell s_0 \geq 2^\ell n.
\end{equation*}
This means that when $n=s_0$, the memory required to store a generator matrix of a twisted Gabidulin code with $\ell$ twists is $2^\ell$ times larger than the one of a Gabidulin code with the same parameters $[n,k]$.
\end{remark}

\subsection{Decoding}

Finding an efficient decoding algorithm for twisted Gabidulin codes is an open problem and research in progress.
For the original twisted Gabidulin codes by Sheekey~\cite{sheekey2016new}, an efficient decoding algorithm was found in \cite{rosenthal2017decoding,randrianarisoa2017decoding}.
Although a generalization to the twisted codes in \cite{puchinger2017further} does not appear to be straightforward, it seems likely that an efficient decoder will be found soon.

\section{Twisted Gabidulin Codes With Large $q$-Sum Dimension}

In this section, we show that twisted Gabidulin codes can have larger $q$-sum dimension than a Gabidulin code of the same dimension would have.

\begin{theorem}[Large $q$-Sum Dimension]\label{thm:large_q-sum-dim_family}
	Let $n,k,\tVec,\hVec,\etaVec,\alphaVec$ be chosen as in Definition~\ref{def:tgab_definition} such that
	\begin{align}
	&\Delta := \tfrac{n-k-\ell}{\ell+1} \in \NN, \label{eq:cond_Delta_integer} \\
	&t_i := (i+1)(\Delta+1), &&\forall \, i =1,\dots,\ell, \label{eq:cond_tVec}\\
	&0 < h_1 < h_2 < \dots h_\ell < k-1 &&\text{s.t.} \notag\\
	&|h_{i+1}-h_i| > 1, &&\forall \, i = 1,\dots,\ell-1. \label{eq:cond_hVec}
	\end{align}
	For all $i \in \NN$, we then have
	\begin{equation*}
	\dim \LambdaOp_i(\Cmult) = \min\{k-1+(i+1)(\ell+1),n\}.
	\end{equation*}
\end{theorem}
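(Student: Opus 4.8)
\emph{Proof idea.} The plan is to transport everything to the ring $\Lset$ of linearized polynomials, to recognise $\LambdaOp_i(\Cmult)$ as the evaluation image of an explicit $\Fqm$-span of monomials, and then to read off its dimension from the invertibility of a generalized Moore matrix. Write $\evOp\colon\Lset\to\Fqm^n$, $f\mapsto\bigl(f(\alpha_1),\dots,f(\alpha_n)\bigr)$, put $\mathcal P:=\evpolys$ and $\mathcal P\qpow p:=\{\,f\qpow p:f\in\mathcal P\,\}$ where $f\qpow p:=\sum_a f_a\qpow p x\qpow{a+p}$. Since $f(\alpha_s)^{q^p}=(f\qpow p)(\alpha_s)$ and $f\mapsto f\qpow p$ is a bijection of $\Lset$, we get $\Cmult=\evOp(\mathcal P)$ and $\LambdaOp_i(\Cmult)=\evOp(\mathcal V_i)$ with $\mathcal V_i:=\mathcal P\qpow 0+\mathcal P\qpow 1+\dots+\mathcal P\qpow i$, hence $\dim\LambdaOp_i(\Cmult)=\dim\mathcal V_i-\dim\!\bigl(\mathcal V_i\cap\ker\evOp\bigr)$.

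First I would show that $\mathcal V_i$ is exactly the $\Fqm$-span of the monomials $x\qpow c$ with
\[
c\in E\;:=\;\{0,1,\dots,k-1+i\}\ \cup\ \bigcup_{j=1}^{\numTwists}\bigl\{\,k-1+t_j,\ k-1+t_j+1,\ \dots,\ k-1+t_j+i\,\bigr\}.
\]
That $\mathcal V_i$ lies in this span is clear from the definition of $\mathcal P\qpow p$. For the reverse inclusion, the untwisted generators of $\mathcal V_i$ are the monomials $x\qpow{a+p}$ with $a\in\{0,\dots,k-1\}\setminus\{h_1,\dots,h_{\numTwists}\}$ and $0\le p\le i$; using $0<h_1$, $h_{\numTwists}<k-1$ and \eqref{eq:cond_hVec} (so that no two consecutive indices of $\{0,\dots,k-1\}$ are both hooks) one checks that every $c\in\{0,\dots,k-1+i\}$ is of the form $a+p$ for such an $a,p$ — the tight cases being $c=0$, forcing $a=0$, $c=k-1+i$, forcing $a=k-1$, and the cases where the window $\{\max(0,c-i),\dots,\min(c,k-1)\}$ of admissible $a$'s has only two elements, which then must contain a non-hook. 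Since moreover $h_j+p\le k-2+i<k-1+i$, each twisted generator $x\qpow{h_j+p}+\eta_j\qpow p x\qpow{k-1+t_j+p}$ equals an already-available pure monomial plus $\eta_j\qpow p x\qpow{k-1+t_j+p}$, and as $\eta_j\neq0$ this places $x\qpow{k-1+t_j+p}$ in $\mathcal V_i$. Thus $\mathcal V_i=\langle x\qpow c:c\in E\rangle_{\Fqm}$, so $\dim\mathcal V_i=|E|$.

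Next I would analyse $E$ and use the Moore-matrix fact. By \eqref{eq:cond_Delta_integer}--\eqref{eq:cond_tVec} the $\numTwists$ blocks $\{k-1+t_j,\dots,k-1+t_j+i\}$ are equally spaced translates of one another, starting just past $k-1+i$, the last of them ending at $k-1+t_{\numTwists}+i$, which equals $n-1$ when $i=\Delta$. Hence for $1\le i\le\Delta$ these $\numTwists$ blocks together with $\{0,\dots,k-1+i\}$ are pairwise disjoint — consecutive blocks leaving a gap of size $\Delta+1-i\ge1$ — so that $|E|=(k+i)+\numTwists(i+1)=k-1+(i+1)(\numTwists+1)$, and all of them lie inside $\{0,1,\dots,n-1\}$, with $E=\{0,1,\dots,n-1\}$ precisely when $i=\Delta$ (the blocks then tiling $\{0,\dots,n-1\}$). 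Since $\alpha_1,\dots,\alpha_n$ are linearly independent over $\Fq$, the matrix $\bigl(\alpha_s\qpow c\bigr)_{0\le c\le n-1,\ 1\le s\le n}$ is an invertible generalized Moore matrix, so the vectors $\bigl(\alpha_1\qpow c,\dots,\alpha_n\qpow c\bigr)$, $0\le c\le n-1$, form a basis of $\Fqm^n$; consequently $\evOp$ is injective on the span of any family of monomials with exponents in $\{0,\dots,n-1\}$, and in particular $\mathcal V_i\cap\ker\evOp=\{0\}$ for $1\le i\le\Delta$. Therefore, for $1\le i\le\Delta$, $\dim\LambdaOp_i(\Cmult)=\dim\mathcal V_i=|E|=k-1+(i+1)(\numTwists+1)$, which is $\le n$ and thus equals the claimed value; and for $i>\Delta$, monotonicity of $\LambdaOp$ together with the case $i=\Delta$ gives $\LambdaOp_i(\Cmult)\supseteq\LambdaOp_\Delta(\Cmult)=\evOp\bigl(\langle x\qpow c:0\le c\le n-1\rangle\bigr)=\Fqm^n$, so $\dim\LambdaOp_i(\Cmult)=n$, again the claimed value.

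The mechanical parts are the exact-cover statement of the second step — where \eqref{eq:cond_hVec} and the boundary conditions $0<h_1$, $h_{\numTwists}<k-1$ are consumed — and the disjointness-and-tiling statement for $E$ — where \eqref{eq:cond_Delta_integer}--\eqref{eq:cond_tVec} are consumed; both are elementary but need care. The point I expect to be the real crux is the structural fact that makes the argument close up: for $i\le\Delta$ every exponent appearing in $\mathcal V_i$ stays below $n$, so the evaluation map is injective on $\mathcal V_i$ for free. If a twist were large enough that some exponent reached $n$ or beyond, one would have to reduce that monomial $x\qpow c$ modulo the subspace (annihilator) polynomial of $\langle\alpha_1,\dots,\alpha_n\rangle_{\Fq}$, and the dimension count would no longer be forced so cleanly — which is precisely why the twists must be spaced exactly as in \eqref{eq:cond_Delta_integer}--\eqref{eq:cond_tVec}.
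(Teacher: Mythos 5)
Your proof is correct, and it takes essentially the same route as the paper's: identify the monomials $x\qpow{c}$, $c\in E$, whose evaluations span $\LambdaOp_i(\Cmult)$, verify that every such $c$ is reachable (using $0<h_1$, $h_\ell<k-1$ and \eqref{eq:cond_hVec}), and count the resulting independent evaluations via the Moore-matrix fact that $\alphaVec\qpow{0},\dots,\alphaVec\qpow{n-1}$ are $\Fqm$-linearly independent. The only presentational difference is that the paper proves the $i=1$ case explicitly and then iterates, whereas you describe $\mathcal V_i$ in closed form for all $i$ at once; the underlying content (the same exponent set, the $\Delta$-spaced blocks, disjointness for $i\le\Delta$, tiling of $\{0,\dots,n-1\}$ at $i=\Delta$) is identical.

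One small thing worth flagging: the arithmetic facts you assert — blocks equally spaced with gap $\Delta+1-i\ge 1$, last block ending at $n-1$ when $i=\Delta$, $|E|=k-1+(i+1)(\ell+1)$ — are consistent with $t_j = j(\Delta+1)$, not with the literal condition \eqref{eq:cond_tVec}, which reads $t_j=(j+1)(\Delta+1)$. With the literal formula one would get $t_\ell = n-k+1$, which already exceeds the range $\{1,\dots,n-k\}$ required of twist vectors in Definition~\ref{def:tgab_definition}, and the top block would end at $n+\Delta$ rather than $n-1$. The paper's own proof (its ``power gap'' claim) and its running example $\tVec=[6,12]$, $\Delta=5$ also use $t_j=j(\Delta+1)$, so \eqref{eq:cond_tVec} carries a typo which you silently, and correctly, did not inherit.
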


\begin{proof}
Let $\Code := \Cmult$.
We first prove that
\begin{align*}
\G_{\LambdaOp_1(\Code)} :=
\begin{bmatrix}
\alphaVec\qpow{0} \\
\alphaVec\qpow{1} \\
\vdots \\
\alphaVec\qpow{k} \\
\alphaVec\qpow{k-1+t_1} \\
\alphaVec\qpow{k+t_1} \\
\alphaVec\qpow{k-1+t_2} \\
\alphaVec\qpow{k+t_2} \\
\vdots \\
\alphaVec\qpow{k-1+t_\ell} \\
\alphaVec\qpow{k+t_\ell}
\end{bmatrix}
\end{align*}
is a generator matrix of the code $\LambdaOp_1(\Code)$. The proof is illustrated in Figure~\ref{fig:Lambda_1_basis_statement}.

\begin{figure}[ht!]
	\begin{center}
		\resizebox{\columnwidth}{!}{
			\begin{tikzpicture}
			\def\ylabelpos{-1cm}
			\def\xlevelminusone{-3cm}
			\def\xlevelzero{0cm}
			\def\xlevelone{3cm}
			\def\xleveltwo{6cm}
			\def\xlevelthree{9cm}
			\def\xwidth{0.6cm}
			\def\ywidth{0.6cm}
			\def\ydist{1cm}
			\def\nval{27}
			\def\kval{5}
			\def\twistval{6}
			\def\twistvaltwo{8}
			\def\twistvalthree{12}
			\def\myeps{0.1cm}
			\def\yshift{-1.6cm}
			\tikzstyle{coeffNodesPure}=[draw, rectangle, color=blue, minimum width=\xwidth, minimum height=\ywidth, inner sep=0pt]
			\tikzstyle{coeffNodes}=[draw, rectangle, minimum width=\xwidth, minimum height=\ywidth, inner sep=0pt]
			\tikzstyle{hoookNodes}=[draw, rectangle, fill=blue!10, minimum width=\xwidth, minimum height=\ywidth, inner sep=0pt]
			\tikzstyle{twistNodes}=[draw, rectangle, fill=black!10, minimum width=\xwidth, minimum height=\ywidth, inner sep=0pt]
			\tikzstyle{coeffNodesLevels}=[draw, rectangle, color=blue, minimum width=\xwidth, minimum height=\ywidth, inner sep=0pt]

			%%%%%%%%%%%%%%%%%%%%%%%%%%%%%%%%%%%%%%%%%%%%%%%%%%%%%%%%%%%%%%%%%%%%%%%%%%%%%%%
			% Code
			%%%%%%%%%%%%%%%%%%%%%%%%%%%%%%%%%%%%%%%%%%%%%%%%%%%%%%%%%%%%%%%%%%%%%%%%%%%%%%%
			% Dotted rectangle
			\draw[dotted] (-0.5*\xwidth,-0.5*\ywidth) rectangle (\nval*\xwidth-0.5*\xwidth,0.5*\ywidth);
			% Coefficient nodes
			\node[coeffNodes]  (levelOnecoeff0) at (0*\xwidth,0) {};
			\node[coeffNodes]  (levelOnecoeff1) at (1*\xwidth,0) {};
			\node[coeffNodes]  (levelOnecoeff2) at (2*\xwidth,0) {};
			\node[coeffNodes]  (levelOnecoeff3) at (3*\xwidth,0) {};
			\node[hoookNodes]  (levelOnecoeff4) at (4*\xwidth,0) {};
			\node[coeffNodes]  (levelOnecoeff5) at (5*\xwidth,0) {};
			\node[hoookNodes]  (levelOnecoeff6) at (6*\xwidth,0) {};
			\node[coeffNodes]  (levelOnecoeff7) at (7*\xwidth,0) {};
			\node[coeffNodes]  (levelOnecoeff8) at (8*\xwidth,0) {};
			\node[coeffNodes]  (levelOnecoeff9) at (9*\xwidth,0) {};
			\node[twistNodes] (levelOnetwist1) at (15*\xwidth,0) {};
			\node[twistNodes] (levelOnetwist2) at (21*\xwidth,0) {};
			% Dependencies
			\draw[bend angle=15, bend left] (levelOnecoeff4.north) to (levelOnetwist1.north);
			\draw[bend angle=15, bend left] (levelOnecoeff6.north) to (levelOnetwist2.north);
			
			%%%%%%%%%%%%%%%%%%%%%%%%%%%%%%%%%%%%%%%%%%%%%%%%%%%%%%%%%%%%%%%%%%%%%%%%%%%%%%%
			% Shifted Code
			%%%%%%%%%%%%%%%%%%%%%%%%%%%%%%%%%%%%%%%%%%%%%%%%%%%%%%%%%%%%%%%%%%%%%%%%%%%%%%%
			% Dotted rectangle
			\draw[dotted] (-0.5*\xwidth,\yshift-0.5*\ywidth) rectangle (\nval*\xwidth-0.5*\xwidth,\yshift+0.5*\ywidth);
			% Coefficient nodes
			\node[coeffNodes]  (levelTwocoeff0) at (1*\xwidth,\yshift) {};
			\node[coeffNodes]  (levelTwocoeff1) at (2*\xwidth,\yshift) {};
			\node[coeffNodes]  (levelTwocoeff2) at (3*\xwidth,\yshift) {};
			\node[coeffNodes]  (levelTwocoeff3) at (4*\xwidth,\yshift) {};
			\node[hoookNodes]  (levelTwocoeff4) at (5*\xwidth,\yshift) {};
			\node[coeffNodes]  (levelTwocoeff5) at (6*\xwidth,\yshift) {};
			\node[hoookNodes]  (levelTwocoeff6) at (7*\xwidth,\yshift) {};
			\node[coeffNodes]  (levelTwocoeff7) at (8*\xwidth,\yshift) {};
			\node[coeffNodes]  (levelTwocoeff8) at (9*\xwidth,\yshift) {};
			\node[coeffNodes]  (levelTwocoeff9) at (10*\xwidth,\yshift) {};
			\node[twistNodes]  (levelTwotwist1) at (16*\xwidth,\yshift) {};
			\node[twistNodes]  (levelTwotwist2) at (22*\xwidth,\yshift) {};
			% Dependencies
			\draw[bend angle=15, bend left] (levelTwocoeff4.north) to (levelTwotwist1.north);
			\draw[bend angle=15, bend left] (levelTwocoeff6.north) to (levelTwotwist2.north);
			
			%%%%%%%%%%%%%%%%%%%%%%%%%%%%%%%%%%%%%%%%%%%%%%%%%%%%%%%%%%%%%%%%%%%%%%%%%%%%%%%
			% Sum
			%%%%%%%%%%%%%%%%%%%%%%%%%%%%%%%%%%%%%%%%%%%%%%%%%%%%%%%%%%%%%%%%%%%%%%%%%%%%%%%
			% Dotted rectangle
			\draw[dotted] (-0.5*\xwidth,2*\yshift-0.5*\ywidth) rectangle (\nval*\xwidth-0.5*\xwidth,2*\yshift+0.5*\ywidth);
			% Coefficient nodes
			\node[coeffNodes]  (levelThreecoeff0) at (0*\xwidth,2*\yshift) {};
			\node[coeffNodes]  (levelThreecoeff0) at (1*\xwidth,2*\yshift) {};
			\node[coeffNodes]  (levelThreecoeff1) at (2*\xwidth,2*\yshift) {};
			\node[coeffNodes]  (levelThreecoeff2) at (3*\xwidth,2*\yshift) {};
			\node[coeffNodes]  (levelThreecoeff3) at (4*\xwidth,2*\yshift) {};
			\node[coeffNodes]  (levelThreecoeff4) at (5*\xwidth,2*\yshift) {};
			\node[coeffNodes]  (levelThreecoeff5) at (6*\xwidth,2*\yshift) {};
			\node[coeffNodes]  (levelThreecoeff6) at (7*\xwidth,2*\yshift) {};
			\node[coeffNodes]  (levelThreecoeff7) at (8*\xwidth,2*\yshift) {};
			\node[coeffNodes]  (levelThreecoeff8) at (9*\xwidth,2*\yshift) {};
			\node[coeffNodes]  (levelThreecoeff10) at (10*\xwidth,2*\yshift) {};
			\node[coeffNodes]  (levelThreecoeff11) at (15*\xwidth,2*\yshift) {};
			\node[coeffNodes]  (levelThreecoeff12) at (16*\xwidth,2*\yshift) {};
			\node[coeffNodes]  (levelThreecoeff13) at (21*\xwidth,2*\yshift) {};
			\node[coeffNodes]  (levelThreecoeff14) at (22*\xwidth,2*\yshift) {};

			%%%%%%%%%%%%%%%%%%%%%%%%%%%%%%%%%%%%%%%%%%%%%%%%%%%%%%%%%%%%%%%%%%%%%%%%%%%%%%%
			% Labels
			%%%%%%%%%%%%%%%%%%%%%%%%%%%%%%%%%%%%%%%%%%%%%%%%%%%%%%%%%%%%%%%%%%%%%%%%%%%%%%%
			\node (plusLabel) at (13*\xwidth,0.5*\yshift) {\huge $+$};
			\node (plusLabel) at (13*\xwidth,1.5*\yshift) {\huge $=$};

			\end{tikzpicture}
		}
	\end{center}
	\caption{Illustration of the first part of Theorem~\ref{thm:large_q-sum-dim_family}'s proof.}
	\label{fig:Lambda_1_basis_statement}
\end{figure}

By Condition \eqref{eq:cond_hVec}, for any $i=1,\dots,\ell$, the vectors $\alphaVec^{h_i-1}$ and $\alphaVec^{h_i+1}$ are in $\Code$, so
\begin{equation*}
(\alphaVec^{h_i-1})\qpow{1} = \alphaVec^{h_i}, \quad
(\alphaVec^{h_i+1})\qpow{1} = \alphaVec^{h_i+2}
\end{equation*}
are in $\Code\qpow{1}$. Hence, for any $\alphaVec^{j}$ with $j=0,\dots,k$, $\alphaVec^{j}$ is in the code $\Code$ (if we have $j \neq h_i$ for all $i$) or in $\Code\qpow{1}$ (if, e.g., $j=h_i$ for some $i$).

Any $\alphaVec\qpow{k-1+t_i}$ is in $\LambdaOp_1(\Code)$ since
\begin{equation*}
\alphaVec\qpow{k-1+t_i} = \eta_i^{-1} \big(\underset{\in \, \Code}{\underbrace{\eta_i \alphaVec\qpow{k-1+t_i}+\alphaVec\qpow{h_i}}} \big) - \eta_i^{-1} \underset{\in \, \LambdaOp_1(\Code)}{\underbrace{\alphaVec\qpow{h_i}}},
\end{equation*}
and similarly, we have
\begin{equation*}
\alphaVec\qpow{k+t_i} = \big(\eta_i^{-1}\big)\qpow{-1} \underset{\in \, \Code\qpow{1}}{\underbrace{\big(\eta_i \alphaVec\qpow{k-1+t_i}+\alphaVec\qpow{h_i}\big)\qpow{1}}}  - \big(\eta_i^{-1}\big)\qpow{-1} \underset{\in \, \LambdaOp_1(\Code)}{\underbrace{\alphaVec\qpow{h_i}}}.
\end{equation*}

Since the $\alphaVec\qpow{i}$ are linearly independent for $0 \leq i <n$ and there are no other possible powers of $\alphaVec$ achievable by linear combinations of elements in $\Code$ and $\Code\qpow{1}$, the rows of the generator matrix $\G_{\LambdaOp_1(\Code)}$ are a basis of $\LambdaOp_1(\Code)$.
As the matrix has
$k+1+ 2 \ell = k-1+2(\ell+1)$
rows.

By the choice of the $t_i$, the ``power gaps'' between $\alphaVec\qpow{k}$ and $\alphaVec\qpow{k-1+t_1}$, as well as between $\alphaVec\qpow{k+t_i}$ and $\alphaVec\qpow{k-1+t_{i+1}}$ for any $i=1,\dots,\ell-1$, and between $\alphaVec\qpow{k+t_\ell}$ and $\alphaVec\qpow{n}$, are exactly $\Delta-1$.

If we iteratively increase the $q$-power of $\Lambda_i(\Code)$ to $\Lambda_{i+1}(\Code)$, the new basis elements are
\begin{equation*}
\alphaVec\qpow{k+i}, \, \alphaVec\qpow{k+t_1+i}, \, \alphaVec\qpow{k+t_2+i}, \dots, \, \alphaVec\qpow{k+t_\ell+i},
\end{equation*}
which are $\ell+1$ many.
This process can be repeated until we have $i=\Delta$, in which case the code $\Lambda_i(\Code)$ already contains all $\alphaVec\qpow{0},\alphaVec\qpow{1},\dots,\alphaVec\qpow{n-1}$ and has dimension $n$.

This implies the claim.
The iteration is illustrated in Figure~\ref{fig:Lambda_i_basis_iteration}.
\end{proof}

\begin{figure}[ht!]
$i=1$:
	\begin{center}
		\resizebox{\columnwidth}{!}{
			\begin{tikzpicture}
			\def\ylabelpos{-1cm}
			\def\xlevelminusone{-3cm}
			\def\xlevelzero{0cm}
			\def\xlevelone{3cm}
			\def\xleveltwo{6cm}
			\def\xlevelthree{9cm}
			\def\xwidth{0.6cm}
			\def\ywidth{0.6cm}
			\def\ydist{1cm}
			\def\nval{27}
			\def\kval{5}
			\def\twistval{6}
			\def\twistvaltwo{8}
			\def\twistvalthree{12}
			\def\myeps{0.1cm}
			\def\yshift{-1.3cm}
			\tikzstyle{coeffNodesPure}=[draw, rectangle, color=blue, minimum width=\xwidth, minimum height=\ywidth, inner sep=0pt]
			\tikzstyle{coeffNodes}=[draw, rectangle, minimum width=\xwidth, minimum height=\ywidth, inner sep=0pt]
			\tikzstyle{hoookNodes}=[draw, rectangle, fill=blue!10, minimum width=\xwidth, minimum height=\ywidth, inner sep=0pt]
			\tikzstyle{twistNodes}=[draw, rectangle, fill=black!10, minimum width=\xwidth, minimum height=\ywidth, inner sep=0pt]
			\tikzstyle{coeffNodesLevels}=[draw, rectangle, color=blue, minimum width=\xwidth, minimum height=\ywidth, inner sep=0pt]

			%%%%%%%%%%%%%%%%%%%%%%%%%%%%%%%%%%%%%%%%%%%%%%%%%%%%%%%%%%%%%%%%%%%%%%%%%%%%%%%
			% Lambda_1(C)
			%%%%%%%%%%%%%%%%%%%%%%%%%%%%%%%%%%%%%%%%%%%%%%%%%%%%%%%%%%%%%%%%%%%%%%%%%%%%%%%
			% Dotted rectangle
			\draw[dotted] (-0.5*\xwidth,0*\yshift-0.5*\ywidth) rectangle (\nval*\xwidth-0.5*\xwidth,0*\yshift+0.5*\ywidth);
			% Coefficient nodes
			\node[coeffNodes] at (0*\xwidth,0*\yshift) {};
			\node[coeffNodes] at (1*\xwidth,0*\yshift) {};
			\node[coeffNodes] at (2*\xwidth,0*\yshift) {};
			\node[coeffNodes] at (3*\xwidth,0*\yshift) {};
			\node[coeffNodes] at (4*\xwidth,0*\yshift) {};
			\node[coeffNodes] at (5*\xwidth,0*\yshift) {};
			\node[coeffNodes] at (6*\xwidth,0*\yshift) {};
			\node[coeffNodes] at (7*\xwidth,0*\yshift) {};
			\node[coeffNodes] at (8*\xwidth,0*\yshift) {};
			\node[coeffNodes] at (9*\xwidth,0*\yshift) {};
			\node[coeffNodes] at (10*\xwidth,0*\yshift) {};
			\node[coeffNodes] at (15*\xwidth,0*\yshift) {};
			\node[coeffNodes] at (16*\xwidth,0*\yshift) {};
			\node[coeffNodes] at (21*\xwidth,0*\yshift) {};
			\node[coeffNodes] at (22*\xwidth,0*\yshift) {};

			%%%%%%%%%%%%%%%%%%%%%%%%%%%%%%%%%%%%%%%%%%%%%%%%%%%%%%%%%%%%%%%%%%%%%%%%%%%%%%%
			% Lambda_1(C)^[1]
			%%%%%%%%%%%%%%%%%%%%%%%%%%%%%%%%%%%%%%%%%%%%%%%%%%%%%%%%%%%%%%%%%%%%%%%%%%%%%%%
			% Dotted rectangle
			\draw[dotted] (-0.5*\xwidth,1*\yshift-0.5*\ywidth) rectangle (\nval*\xwidth-0.5*\xwidth,1*\yshift+0.5*\ywidth);
			% Coefficient nodes
			\node[coeffNodes] at (1*\xwidth,1*\yshift) {};
			\node[coeffNodes] at (2*\xwidth,1*\yshift) {};
			\node[coeffNodes] at (3*\xwidth,1*\yshift) {};
			\node[coeffNodes] at (4*\xwidth,1*\yshift) {};
			\node[coeffNodes] at (5*\xwidth,1*\yshift) {};
			\node[coeffNodes] at (6*\xwidth,1*\yshift) {};
			\node[coeffNodes] at (7*\xwidth,1*\yshift) {};
			\node[coeffNodes] at (8*\xwidth,1*\yshift) {};
			\node[coeffNodes] at (9*\xwidth,1*\yshift) {};
			\node[coeffNodes] at (10*\xwidth,1*\yshift) {};
			\node[coeffNodes] at (11*\xwidth,1*\yshift) {};
			\node[coeffNodes] at (16*\xwidth,1*\yshift) {};
			\node[coeffNodes] at (17*\xwidth,1*\yshift) {};
			\node[coeffNodes] at (22*\xwidth,1*\yshift) {};
			\node[coeffNodes] at (23*\xwidth,1*\yshift) {};
			
			%%%%%%%%%%%%%%%%%%%%%%%%%%%%%%%%%%%%%%%%%%%%%%%%%%%%%%%%%%%%%%%%%%%%%%%%%%%%%%%
			% Sum
			%%%%%%%%%%%%%%%%%%%%%%%%%%%%%%%%%%%%%%%%%%%%%%%%%%%%%%%%%%%%%%%%%%%%%%%%%%%%%%%
			% Dotted rectangle
			\draw[dotted] (-0.5*\xwidth,2*\yshift-0.5*\ywidth) rectangle (\nval*\xwidth-0.5*\xwidth,2*\yshift+0.5*\ywidth);
			% Coefficient nodes
			\node[coeffNodes] at (0*\xwidth,2*\yshift) {};
			\node[coeffNodes] at (1*\xwidth,2*\yshift) {};
			\node[coeffNodes] at (2*\xwidth,2*\yshift) {};
			\node[coeffNodes] at (3*\xwidth,2*\yshift) {};
			\node[coeffNodes] at (4*\xwidth,2*\yshift) {};
			\node[coeffNodes] at (5*\xwidth,2*\yshift) {};
			\node[coeffNodes] at (6*\xwidth,2*\yshift) {};
			\node[coeffNodes] at (7*\xwidth,2*\yshift) {};
			\node[coeffNodes] at (8*\xwidth,2*\yshift) {};
			\node[coeffNodes] at (9*\xwidth,2*\yshift) {};
			\node[coeffNodes] at (10*\xwidth,2*\yshift) {};
			\node[coeffNodes] at (11*\xwidth,2*\yshift) {};
			\node[coeffNodes] at (15*\xwidth,2*\yshift) {};
			\node[coeffNodes] at (16*\xwidth,2*\yshift) {};
			\node[coeffNodes] at (17*\xwidth,2*\yshift) {};
			\node[coeffNodes] at (21*\xwidth,2*\yshift) {};
			\node[coeffNodes] at (22*\xwidth,2*\yshift) {};
			\node[coeffNodes] at (23*\xwidth,2*\yshift) {};
			
			%%%%%%%%%%%%%%%%%%%%%%%%%%%%%%%%%%%%%%%%%%%%%%%%%%%%%%%%%%%%%%%%%%%%%%%%%%%%%%%
			% Labels
			%%%%%%%%%%%%%%%%%%%%%%%%%%%%%%%%%%%%%%%%%%%%%%%%%%%%%%%%%%%%%%%%%%%%%%%%%%%%%%%
			\node (plusLabel) at (13*\xwidth,0.5*\yshift) {\huge $+$};
			\node (plusLabel) at (13*\xwidth,1.5*\yshift) {\huge $=$};

			\end{tikzpicture}
		}
	\end{center}
	\begin{center}
		$\vdots$
	\end{center}
	\vspace{-0.5cm}
	$i=4=\Delta-1$:
	\begin{center}
		\resizebox{\columnwidth}{!}{
			\begin{tikzpicture}
			\def\ylabelpos{-1cm}
			\def\xlevelminusone{-3cm}
			\def\xlevelzero{0cm}
			\def\xlevelone{3cm}
			\def\xleveltwo{6cm}
			\def\xlevelthree{9cm}
			\def\xwidth{0.6cm}
			\def\ywidth{0.6cm}
			\def\ydist{1cm}
			\def\nval{27}
			\def\kval{5}
			\def\twistval{6}
			\def\twistvaltwo{8}
			\def\twistvalthree{12}
			\def\myeps{0.1cm}
			\def\yshift{-1.3cm}
			\tikzstyle{coeffNodesPure}=[draw, rectangle, color=blue, minimum width=\xwidth, minimum height=\ywidth, inner sep=0pt]
			\tikzstyle{coeffNodes}=[draw, rectangle, minimum width=\xwidth, minimum height=\ywidth, inner sep=0pt]
			\tikzstyle{hoookNodes}=[draw, rectangle, fill=blue!10, minimum width=\xwidth, minimum height=\ywidth, inner sep=0pt]
			\tikzstyle{twistNodes}=[draw, rectangle, fill=black!10, minimum width=\xwidth, minimum height=\ywidth, inner sep=0pt]
			\tikzstyle{coeffNodesLevels}=[draw, rectangle, color=blue, minimum width=\xwidth, minimum height=\ywidth, inner sep=0pt]

			%%%%%%%%%%%%%%%%%%%%%%%%%%%%%%%%%%%%%%%%%%%%%%%%%%%%%%%%%%%%%%%%%%%%%%%%%%%%%%%
			% Lambda_4(C)
			%%%%%%%%%%%%%%%%%%%%%%%%%%%%%%%%%%%%%%%%%%%%%%%%%%%%%%%%%%%%%%%%%%%%%%%%%%%%%%%
			% Dotted rectangle
			\draw[dotted] (-0.5*\xwidth,0*\yshift-0.5*\ywidth) rectangle (\nval*\xwidth-0.5*\xwidth,0*\yshift+0.5*\ywidth);
			% Coefficient nodes
			\node[coeffNodes] at (0*\xwidth,0*\yshift) {};
			\node[coeffNodes] at (1*\xwidth,0*\yshift) {};
			\node[coeffNodes] at (2*\xwidth,0*\yshift) {};
			\node[coeffNodes] at (3*\xwidth,0*\yshift) {};
			\node[coeffNodes] at (4*\xwidth,0*\yshift) {};
			\node[coeffNodes] at (5*\xwidth,0*\yshift) {};
			\node[coeffNodes] at (6*\xwidth,0*\yshift) {};
			\node[coeffNodes] at (7*\xwidth,0*\yshift) {};
			\node[coeffNodes] at (8*\xwidth,0*\yshift) {};
			\node[coeffNodes] at (9*\xwidth,0*\yshift) {};
			\node[coeffNodes] at (10*\xwidth,0*\yshift) {};
			\node[coeffNodes] at (11*\xwidth,0*\yshift) {};
			\node[coeffNodes] at (12*\xwidth,0*\yshift) {};
			\node[coeffNodes] at (13*\xwidth,0*\yshift) {};
			
			\node[coeffNodes] at (15*\xwidth,0*\yshift) {};
			\node[coeffNodes] at (16*\xwidth,0*\yshift) {};
			\node[coeffNodes] at (17*\xwidth,0*\yshift) {};
			\node[coeffNodes] at (18*\xwidth,0*\yshift) {};
			\node[coeffNodes] at (19*\xwidth,0*\yshift) {};
			
			\node[coeffNodes] at (21*\xwidth,0*\yshift) {};
			\node[coeffNodes] at (22*\xwidth,0*\yshift) {};
			\node[coeffNodes] at (23*\xwidth,0*\yshift) {};
			\node[coeffNodes] at (24*\xwidth,0*\yshift) {};
			\node[coeffNodes] at (25*\xwidth,0*\yshift) {};

			%%%%%%%%%%%%%%%%%%%%%%%%%%%%%%%%%%%%%%%%%%%%%%%%%%%%%%%%%%%%%%%%%%%%%%%%%%%%%%%
			% Lambda_4(C)^[1]
			%%%%%%%%%%%%%%%%%%%%%%%%%%%%%%%%%%%%%%%%%%%%%%%%%%%%%%%%%%%%%%%%%%%%%%%%%%%%%%%
			% Dotted rectangle
			\draw[dotted] (-0.5*\xwidth,1*\yshift-0.5*\ywidth) rectangle (\nval*\xwidth-0.5*\xwidth,1*\yshift+0.5*\ywidth);
			% Coefficient nodes
			\node[coeffNodes] at (1*\xwidth,1*\yshift) {};
			\node[coeffNodes] at (2*\xwidth,1*\yshift) {};
			\node[coeffNodes] at (3*\xwidth,1*\yshift) {};
			\node[coeffNodes] at (4*\xwidth,1*\yshift) {};
			\node[coeffNodes] at (5*\xwidth,1*\yshift) {};
			\node[coeffNodes] at (6*\xwidth,1*\yshift) {};
			\node[coeffNodes] at (7*\xwidth,1*\yshift) {};
			\node[coeffNodes] at (8*\xwidth,1*\yshift) {};
			\node[coeffNodes] at (9*\xwidth,1*\yshift) {};
			\node[coeffNodes] at (10*\xwidth,1*\yshift) {};
			\node[coeffNodes] at (11*\xwidth,1*\yshift) {};
			\node[coeffNodes] at (12*\xwidth,1*\yshift) {};
			\node[coeffNodes] at (13*\xwidth,1*\yshift) {};
			\node[coeffNodes] at (14*\xwidth,1*\yshift) {};
			
			\node[coeffNodes] at (16*\xwidth,1*\yshift) {};
			\node[coeffNodes] at (17*\xwidth,1*\yshift) {};
			\node[coeffNodes] at (18*\xwidth,1*\yshift) {};
			\node[coeffNodes] at (19*\xwidth,1*\yshift) {};
			\node[coeffNodes] at (20*\xwidth,1*\yshift) {};
			
			\node[coeffNodes] at (22*\xwidth,1*\yshift) {};
			\node[coeffNodes] at (23*\xwidth,1*\yshift) {};
			\node[coeffNodes] at (24*\xwidth,1*\yshift) {};
			\node[coeffNodes] at (25*\xwidth,1*\yshift) {};
			\node[coeffNodes] at (26*\xwidth,1*\yshift) {};
			
			%%%%%%%%%%%%%%%%%%%%%%%%%%%%%%%%%%%%%%%%%%%%%%%%%%%%%%%%%%%%%%%%%%%%%%%%%%%%%%%
			% Sum
			%%%%%%%%%%%%%%%%%%%%%%%%%%%%%%%%%%%%%%%%%%%%%%%%%%%%%%%%%%%%%%%%%%%%%%%%%%%%%%%
			% Dotted rectangle
			\draw[dotted] (-0.5*\xwidth,2*\yshift-0.5*\ywidth) rectangle (\nval*\xwidth-0.5*\xwidth,2*\yshift+0.5*\ywidth);
			% Coefficient nodes
			\node[coeffNodes] at (0*\xwidth,2*\yshift) {};
			\node[coeffNodes] at (1*\xwidth,2*\yshift) {};
			\node[coeffNodes] at (2*\xwidth,2*\yshift) {};
			\node[coeffNodes] at (3*\xwidth,2*\yshift) {};
			\node[coeffNodes] at (4*\xwidth,2*\yshift) {};
			\node[coeffNodes] at (5*\xwidth,2*\yshift) {};
			\node[coeffNodes] at (6*\xwidth,2*\yshift) {};
			\node[coeffNodes] at (7*\xwidth,2*\yshift) {};
			\node[coeffNodes] at (8*\xwidth,2*\yshift) {};
			\node[coeffNodes] at (9*\xwidth,2*\yshift) {};
			\node[coeffNodes] at (10*\xwidth,2*\yshift) {};
			\node[coeffNodes] at (11*\xwidth,2*\yshift) {};
			\node[coeffNodes] at (12*\xwidth,2*\yshift) {};
			\node[coeffNodes] at (13*\xwidth,2*\yshift) {};
			\node[coeffNodes] at (14*\xwidth,2*\yshift) {};
			\node[coeffNodes] at (15*\xwidth,2*\yshift) {};
			\node[coeffNodes] at (16*\xwidth,2*\yshift) {};
			\node[coeffNodes] at (17*\xwidth,2*\yshift) {};
			\node[coeffNodes] at (18*\xwidth,2*\yshift) {};
			\node[coeffNodes] at (19*\xwidth,2*\yshift) {};
			\node[coeffNodes] at (20*\xwidth,2*\yshift) {};
			\node[coeffNodes] at (21*\xwidth,2*\yshift) {};
			\node[coeffNodes] at (22*\xwidth,2*\yshift) {};
			\node[coeffNodes] at (23*\xwidth,2*\yshift) {};
			\node[coeffNodes] at (24*\xwidth,2*\yshift) {};
			\node[coeffNodes] at (25*\xwidth,2*\yshift) {};
			\node[coeffNodes] at (26*\xwidth,2*\yshift) {};
			
			%%%%%%%%%%%%%%%%%%%%%%%%%%%%%%%%%%%%%%%%%%%%%%%%%%%%%%%%%%%%%%%%%%%%%%%%%%%%%%%
			% Labels
			%%%%%%%%%%%%%%%%%%%%%%%%%%%%%%%%%%%%%%%%%%%%%%%%%%%%%%%%%%%%%%%%%%%%%%%%%%%%%%%
			\node (plusLabel) at (13*\xwidth,0.5*\yshift) {\huge $+$};
			\node (plusLabel) at (13*\xwidth,1.5*\yshift) {\huge $=$};

			\end{tikzpicture}
		}
	\end{center}
	\caption{Illustration of the inductive argument in the second half of Theorem~\ref{thm:large_q-sum-dim_family}'s proof.}
	\label{fig:Lambda_i_basis_iteration}
\end{figure}

\section{Overbeck's Attack on Twisted Gabidulin Codes}

We consider the code family in Theorem~\ref{thm:large_q-sum-dim_family} and show that it is resistant against Overbeck's attack by considering its two underlying key ideas (Properties \ref{itm:h_in_dual_space} and \ref{itm:h_in_dual_space} in Section~\ref{ssec:Overbecks_Attack}).

\subsection{Property \ref{itm:h_in_dual_space}}

\begin{theorem}
	Let $\Code$ be a code defined as in Theorem~\ref{thm:large_q-sum-dim_family} and let $\Code_i := \LambdaOp_i(\Code)$ be its $i^\mathrm{th}$ $q$-sum.
	If $\Code_i \neq \Fqm^n$, then $\Code_i$ does not have a generator matrix of the form
	\begin{align*}
	\begin{bmatrix}
	\pcVec \\
	\pcVec\qpow{1} \\
	\vdots \\
	\pcVec\qpow{j-1}
	\end{bmatrix},
	\end{align*}
	where $\pcVec \in \Fqm^n$ with linearly independent entries and $j \leq n$.
\end{theorem}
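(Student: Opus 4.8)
The plan is to argue by contradiction, playing the arithmetic progression of $q$-sum dimensions from Theorem~\ref{thm:large_q-sum-dim_family} against the ``grows by exactly one'' behaviour forced by a Moore-type generator matrix. Assume $\Code_i \neq \Fqm^n$ and that $\Code_i$ has a generator matrix $\ve{M} = \LambdaOp_{j-1}(\pcVec)$ with $\pc_1,\dots,\pc_n$ linearly independent over $\Fq$ and $j \leq n$. First I would observe that $\ve{M}$ has full row rank $j$: any $j\times j$ submatrix obtained by picking columns $\ell_1 < \dots < \ell_j$ is a Moore matrix in the $\Fq$-linearly independent elements $\pc_{\ell_1},\dots,\pc_{\ell_j}$, hence has nonzero determinant. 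Therefore $\dim \Code_i = j$; since $\Code_i \neq \Fqm^n$ this forces $j \leq n-1$, and reading the rows of $\ve{M}$ as the evaluation vectors of $x\qpow{0},\dots,x\qpow{j-1}$ shows that $\Code_i$ is precisely the $[n,j]$ Gabidulin code with evaluation points $\pc_1,\dots,\pc_n$.

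Next I would use the compatibility of the $q$-sum operator with iteration, namely $\LambdaOp_1\big(\LambdaOp_i(\Code)\big) = \LambdaOp_{i+1}(\Code)$, which is immediate from $\big(\Code + \dots + \Code\qpow{i}\big) + \big(\Code\qpow{1} + \dots + \Code\qpow{i+1}\big) = \LambdaOp_{i+1}(\Code)$ (equivalently, the row space of $\LambdaOp_1(\ve{M})$ equals that of $\LambdaOp_j(\pcVec)$). Since $\Code_i$ is an $[n,j]$ Gabidulin code with $j < n$, the recalled identity $\dim \LambdaOp_1(\text{Gabidulin}) = \min\{n,\,\dim+1\}$ gives $\dim \Code_{i+1} = \dim \LambdaOp_1(\Code_i) = j+1 = \dim \Code_i + 1$.

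On the other hand, I would invoke Theorem~\ref{thm:large_q-sum-dim_family}. For $i \geq 1$, the assumption $\Code_i \neq \Fqm^n$ means $\dim \Code_i = k-1+(i+1)(\ell+1) < n = k-1+(\Delta+1)(\ell+1)$, hence $i+1 \leq \Delta$, hence $k-1+(i+2)(\ell+1) \leq n$, so $\dim \Code_{i+1} = k-1+(i+2)(\ell+1) = \dim \Code_i + (\ell+1) \geq \dim \Code_i + 2$, using $\ell \geq 1$. The case $i=0$ is handled the same way: $\dim \Code_0 = k$ while $\dim \Code_1 = \min\{n,\,k-1+2(\ell+1)\} \geq k+2$, because $n-k = \ell + \Delta(\ell+1) \geq 2\ell+1 \geq 3$. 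In every case $\dim \Code_{i+1} \geq \dim \Code_i + 2$, contradicting the previous paragraph; hence no such generator matrix exists.

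The only point requiring care — more bookkeeping than genuine obstacle — is pinning down that $\LambdaOp_{j-1}(\pcVec)$ has rank exactly $j$ (so that $j=\dim\Code_i$ and $\Code_i$ is a bona fide $[n,j]$ Gabidulin code rather than a proper subcode of one), together with the two boundary cases: $j=n$, which is excluded since it would force $\Code_i = \Fqm^n$ against the hypothesis, and $i=0$, where Theorem~\ref{thm:large_q-sum-dim_family} is stated only for $i\ge 1$ so one falls back on $\dim \Cmult = k$. Beyond that, the proof is simply the contrast between a Moore/Gabidulin generator matrix, whose code's dimension increases by one under $\LambdaOp_1$, and the codes of Theorem~\ref{thm:large_q-sum-dim_family}, whose dimension jumps by $\ell+1 \ge 2$ at every step until it saturates at $n$.
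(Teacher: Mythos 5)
Your proof is correct and takes essentially the same route as the paper: a Moore-form generator matrix forces $\Code_i$ to be a Gabidulin code, whence $\dim\LambdaOp_1(\Code_i)=\dim\Code_i+1$, contradicting the $(\ell+1)$-step jump in $q$-sum dimensions from Theorem~\ref{thm:large_q-sum-dim_family}. Your version is in fact somewhat more careful than the paper's, which phrases the hypothesis as a parity-check rather than a generator matrix and does not spell out the full-rank Moore argument, the identity $\LambdaOp_1(\LambdaOp_i(\Code))=\LambdaOp_{i+1}(\Code)$, or the boundary cases $i=0$ and $j=n$.
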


\begin{proof}
The $q$-sums of the codes $\Code$ in Theorem~\ref{thm:large_q-sum-dim_family} increase by $\ell+1$ when increasing $i$. If any of the $q$-sums $\Code_i := \LambdaOp_i(\Code)$ had a parity check matrix of the above form, then, by the same arguments as in \cite{Gabidulin_TheoryOfCodes_1985}, $\Code_i$ would have a generator matrix of the form
$[\betaVec^\top,{\betaVec\qpow{1}}^\top, \dots, {\betaVec\qpow{n-j-1}}^\top]^\top$
for some $\betaVec \in \Fqm^n$ with linearly independent entries. This would, however, imply that the first $q$-sum $\LambdaOp_1(\Code_i)$ of $\Code_i$ would have dimension
\begin{align*}
\dim \LambdaOp_1(\Code_i) = \dim \Code_{i+1} = \dim \Code_i +1,
\end{align*}
contradicting the fact that the $q$-sum increases by $\ell+1$.
\end{proof}

\subsection{Property \ref{itm:Lambda_i_n-1}}

\begin{theorem}
	Let $\Code$ be a code defined as in Theorem~\ref{thm:large_q-sum-dim_family} and let $\Code_i := \LambdaOp_i(\Code)$ be its $i^\mathrm{th}$ $q$-sum. Then, we have $\Code_i = \Fqm^n$ or
	\begin{align*}
	\dim \Code_i \leq n-(\ell+1).
	\end{align*}
\end{theorem}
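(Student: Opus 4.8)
The plan is to exploit the exact dimension formula from Theorem~\ref{thm:large_q-sum-dim_family}, namely $\dim \LambdaOp_i(\Code) = \min\{k-1+(i+1)(\ell+1),n\}$, and observe that the non-saturated values this expression can take form an arithmetic progression with common difference $\ell+1$. First I would fix $i$ and suppose $\Code_i \neq \Fqm^n$, so that $\dim\Code_i = k-1+(i+1)(\ell+1)$. The claim is that this quantity is at most $n-(\ell+1)$. Since the next term in the progression is $\dim\Code_{i+1} = k-1+(i+2)(\ell+1) = \dim\Code_i + (\ell+1)$, and since $\Code_{i+1}$ is itself a subspace of $\Fqm^n$, we have $\dim\Code_i + (\ell+1) = \dim\Code_{i+1} \leq n$, which gives precisely $\dim\Code_i \leq n-(\ell+1)$.

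The only subtlety is making sure that ``$\Code_i \neq \Fqm^n$'' really does force the formula into its non-saturated branch rather than leaving an ambiguous boundary case. This is immediate: if $\dim\Code_i < n$ then by the formula $\dim\Code_i = k-1+(i+1)(\ell+1)$ (the $\min$ is attained by the first argument), and then the argument above applies verbatim. If instead $\dim\Code_i = n$, then $\Code_i = \Fqm^n$ since $\Code_i \subseteq \Fqm^n$, which is the first alternative in the statement. So in all cases one of the two alternatives holds.

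I expect the main (and essentially only) obstacle to be a bookkeeping one: confirming that the dimension sequence really increases in steps of exactly $\ell+1$ up until it reaches $n$, with no ``overshoot'' that would skip past the window $[n-\ell, n-1]$ in a way that still leaves $\Code_i$ non-saturated. But Theorem~\ref{thm:large_q-sum-dim_family} already rules this out: for every $i\in\NN$ the dimension is \emph{exactly} $\min\{k-1+(i+1)(\ell+1),n\}$, so once $\dim\Code_i<n$ we know $\dim\Code_{i+1}\le n$ and the increment is exactly $\ell+1$, forcing $\dim\Code_i\le n-(\ell+1)$. Hence the proof is a one-line consequence of the preceding theorem, and no further machinery (such as the explicit basis constructions in its proof) is needed here.
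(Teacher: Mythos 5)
Your overall idea — that the dimension sequence from Theorem~\ref{thm:large_q-sum-dim_family} is an arithmetic progression with common difference $\ell+1$ that lands exactly on $n$ — is the right one, but the proof as written has a circular step. You assert that ``once $\dim\Code_i<n$ \dots the increment is exactly $\ell+1$,'' and that this together with $\dim\Code_{i+1}\le n$ forces $\dim\Code_i\le n-(\ell+1)$. However, the dimension formula alone does not give you this increment: it only gives $\dim\Code_{i+1} = \min\{k-1+(i+2)(\ell+1),\,n\}$, so the increment equals $\ell+1$ precisely when $k-1+(i+2)(\ell+1)\le n$ --- which is the very inequality you are trying to establish. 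Without an extra ingredient, nothing prevents a ``partial'' last step where $k-1+(i+1)(\ell+1) \in \{n-\ell,\dots,n-1\}$ and $\Code_{i+1}$ saturates at $n$ with an increment strictly smaller than $\ell+1$.

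The missing ingredient is condition~\eqref{eq:cond_Delta_integer} of Theorem~\ref{thm:large_q-sum-dim_family}, i.e.\ $\Delta = \tfrac{n-k-\ell}{\ell+1}\in\NN$, which is equivalent to $n = k-1+(\Delta+1)(\ell+1)$. This says that $n$ itself is a term of the arithmetic progression $k-1+j(\ell+1)$, so the progression cannot overshoot. Concretely: if $\dim\Code_i<n$, then $k-1+(i+1)(\ell+1) < k-1+(\Delta+1)(\ell+1)$, hence $i+1\le\Delta$, hence $\dim\Code_i = k-1+(i+1)(\ell+1)\le k-1+\Delta(\ell+1) = n-(\ell+1)$. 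Once you make this explicit, your proof is correct and matches the paper's (which invokes the dimension formula without elaboration); as written, though, the crucial ``no partial step'' claim is asserted rather than proved.
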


\begin{proof}
This statement follows directly from the dimension of $\Code_i$, which is given in Theorem~\ref{thm:large_q-sum-dim_family}.
\end{proof}

\subsection{An Exponential-Time Attack}\label{ssec:exp-time-attack}

By construction, any $q$-sum (that is not equal to $\Fqm^n$) of a code $\Code$ as in Theorem~\ref{thm:large_q-sum-dim_family} is a subset of a large Gabidulin code $\mathcal{D}$, which is defined as follows:
\begin{align*}
\C_i := \LambdaOp_i(\Code) \subseteq \langle \alphaVec, \alphaVec\qpow{1},\dots,\alphaVec\qpow{n-2} \rangle =: \mathcal{D}.
\end{align*}
Since any non-zero vector $\pcVec$ in the dual code of $\mathcal{D}$ gives a parity check matrix from which the evaluation points $\alphaVec$ can be recovered, such an $\pcVec$ can be found by first determining the $q$-sum $\Code_i$ of $\Code$ of dimension $n-(\ell+1)$ and then searching its dual code $\Code_i^\perp$. Due to $|\Code_i^\perp| = (q^m)^{\ell+1}$, such an attack has work factor
\begin{align}
W_\mathrm{Exp-Att} = \frac{(q^m)^{\ell+1}}{q^m-1} \approx q^{m \ell}. \label{eq:exponential_attack_WF}
\end{align}
Hence, we must choose $m$ and $\ell$ large enough in order to prevent this attack to be efficient.

\section{Example Parameters}

\begin{table*}[!t]
	\caption{Comparison of McEliece, Loidreau, Twisted GPT and QC-MDPC}
	\begin{center}
		\begin{tabular}{l|c|c|c|c|c|c|c|c|c|c|c||c|c|r }
			Method &$q$  & $k$ & $n$ & $m$ & $l$ & $\lambda$ & $s$ & $t$  & $\tau$ & $t_{\text{Loi}}$ & $\lambda^\prime$ & Security level & Rate & Key size  \\
                  \hline  \hline
                  McEliece &2  & 1436 & 1876 & 11 & & & & &  41 & & & 80.04 & 0.77 & 78.98 KB \\
                  \hline
                  Loidreau &2  & 32 & 50 & 50 &  & & & &  & 3 &3  & 80.93 & 0.64 & 3.60 KB \\
                  \hline
                  Twisted GPT &2  & 18 & 26  & 104  & 2  & 6  & 1  & 4  & & & & 83.10 & 0.56 & 3.28 KB \\
                  \hline
                  QC-MDPC & 2   &  4801 & 9602 & & & & &  & & &  & 80.00 & 0.50 & 0.60 KB \\

                  \hline \hline
                  McEliece &2  & 2482 & 3262 & 12 & & & & & 66 & & & 128.02 & 0.76  & 242.00 KB  \\
                  \hline
                  Loidreau &2  & 40 & 64 & 96 & & & &  & & 4 &3 & 139.75 & 0.63 & 11.52 KB \\
                  \hline
                  Twisted GPT &2  & 21 & 33 & 132 & 2 & 8 & 1  & 6 & & & & 138.89 & 0.51 & 6.93 KB \\
                  \hline
                  QC-MDPC & 2   &  9857 & 19714 & & & &  & &  & &  & 128.00 & 0.50 & 1.23 KB \\

                  \hline \hline
                  McEliece&2  & 5318 & 7008 & 13 & & & & & 133 & && 257.47 & 0.76  & 1123.43 KB  \\
                  \hline
                  Loidreau&2  &80 &120  &128  &  & & & & & 4  &5 &261.00  & 0.67  & 51.20 KB  \\
                  \hline
                  Twisted GPT &2   & 32 & 48 & 192 & 2 & 12 & 2 & 8 & & & & 262.75 & 0.53 & 21.50 KB \\
                  \hline
                  QC-MDPC & 2   & 32771  &  65542  & & & &  & & & &  & 256.00 & 0.50 & 4.10 KB \\
		\end{tabular}
	\end{center}
	\label{tab:security_para}
	\hrulefill
\end{table*}

In this Section, the security level, the rate and the keysize of the GPT cryptosystem based on twisted Gabidulin codes is compared with McEliece's cryptosystem based on Goppa codes using list decoding \cite{Barbier_2011}, Loidreau's new rank-metric code-based encryption scheme~\cite{loidreau2016evolution,loidreau2017new} and the QC-MDPC cryptosystem \cite{Misoczki_2013}.

The considered attacks on the new GPT variant based on twisted Gabidulin codes are the syndrome decoding attacks in \cite{ourivski2002new,aragon2017improved}, Gibson's attack \cite{gibson1995severely,gibson1996security}, and the exponential-time attack described in Section~\ref{ssec:exp-time-attack}, cf.~\eqref{eq:exponential_attack_WF}.

Table \ref{tab:security_para} gives parameters for expected work factors of around $2^{80}$, $2^{130}$ and $2^{260}$. The security level of the GPT system which is based on twisted Gabidulin codes is determined by the smallest work factor which is
given by the decoding attack in \cite{ourivski2002new}
for all three cases. We observe that for all work factors McEliece has the highest rate followed by Loidreau, Twisted GPT and QC-MDPC. The results show further that although the keysizes of Twisted GPT and Loidreau are larger than the keysizes of QC-MDPC, they require much smaller key sizes compared to McEliece. 
Since the QC-MDPC scheme gives no guarantee that the cipher can be decrypted, the GPT cryptosystem based on twisted Gabidulin codes should be considered as an alternative of McEliece, Loidreau and QC-MDPC.

\section{Conclusion}

We have shown that a subfamily of twisted Gabidulin resists the Overbeck attack and could therefore be considered for the use in the GPT cryptosystem.
The resulting example key sizes improve upon the original McEliece and the Loidreau's rank-metric cryptosystem.

A drawback of the codes remains that for a small number of twists $\ell$, their $q$-sum dimension is rather low compared to random codes of the same dimension (though larger than the one of a Gabidulin code).
This gives a distinguisher and potentially results in a weakness of the system.
However, we are not aware of an explicit attack that can utilize this distinguisher.
Further research must be conducted in order to investigate this issue.

\bibliographystyle{IEEEtran}
\bibliography{main}

\end{document}